\newtheorem{theorem}{Theorem}[section]
\newtheorem{corollary}[theorem]{Corollary}
\newtheorem{proposition}[theorem]{Proposition}
\def\bbN{\mathbb{N}}
\def\half{{\frac{1}{2}}}
\def\stra{\mathfrak{A}}
\def\floor #1{\mbox{$\lfloor #1 \rfloor$}}
\def\vq{{\bar{q}}}
\def\vx{{\bar{x}}}
\def\vy{{\bar{y}}}
\def\nram#1{\textsc{NRAM}[#1]}
\def\ntime#1{\textsc{NTIME}[#1]}
\def\cone{\textsc{co-NE}}
\def\ne{\textsc{NE}}
\def\conp{\textsc{co-NP}}
\def\np{\textsc{NP}}
\def\p{\textsc{P}}
\def\soe{\textsc{SO}\exists}
\def\soevar#1{\textsc{SO}\exists\mathrm{(var}\ #1\mathrm{)}}
\def\fspec{\textsc{F-Spec}}
\def\spec{\textsc{Spec}}
\def\cospec{\textsc{Co-Spec}}
\def\bin{\textsf{BINARY}}
\def\arity{\textrm{arity}}
\def\relADD{\textsf{ADD}}
\def\relMUL{\textsf{MUL}}
\def\relPROJECT{\textsf{PROJECT}}
\def\relDOUBLE{\textsf{DOUBLE}}
\def\relHALF{\textsf{HALF}}
\def\relDIV{\textsf{DIV}}
\def\relBIT{\textsf{BIT}}
\def\relISR{\textsf{IS-R}}
\def\relLESSR{\textsf{LESS-R}}
\def\relLESSRR{\textsf{LESS-R2}}
\def\relSYMBOL{\textsf{SYMBOL}}
\def\relSTATE{\textsf{STATE}}
\def\relMIN{\textsf{MIN}}
\def\relMAX{\textsf{MAX}}
\def\suc{\textsf{SUC}}
\def\relsucx{\textsf{SUCX}}
\def\relsucy{\textsf{SUCY}}
\def\relMINx{\textsf{MINX}}
\def\relMINy{\textsf{MINY}}
\def\relcyc{\textsf{RCYC}}
\def\relINPUT{\textsf{INPUT}}
\def\IFF{\Leftrightarrow}
\def\OMIT #1{}
\title{On the variable hierarchy of first-order spectra}
\author{Eryk Kopczy\'nski
\\
University of Warsaw
\\
\\
Tony Tan\\
Hasselt University and Transnational University of Limburg}
\begin{abstract}
The {\em spectrum} of a first-order logic sentence 
is the set of natural numbers that are cardinalities of its finite models.
In this paper we study the hierarchy of first-order spectra
based on the number of variables.
It has been conjectured that it collapses to three variable.
We show the opposite: it forms an infinite hierarchy.
However, despite the fact that more variables
can express more spectra, 
we show that to establish whether the class of first-order spectra is closed under complement,
it is sufficient to consider sentences using only three variables and binary relations.
\end{abstract}
\keywords{first-order spectra, bounded number of variables, non-deterministic exponential time}
\begin{document}

\maketitle

\section{Introduction}
\label{sec:intro}

The spectrum of a first-order sentence $\Phi$ (with the equality predicate), 
denoted by $\spec(\Phi)$,
is the set of natural numbers that are cardinalities of finite models of $\Phi$.
Or, more formally, $\spec(\Phi) = 
\{n \mid \Phi$ has a model with universe of cardinality $n\}$.
A set is a {\em spectrum},
if it is the spectrum of a first-order sentence.
We let $\spec$ to denote the class of all spectra.
Without the equality predicate, it is known that
if a sentence has a model of cardinality $n$,
then it also has a model of cardinality $n+1$.

The notion of the {\em spectrum} was introduced by Scholz, 
where he also asked whether there exists a 
necessary and sufficient condition for a set to be a spectrum~\cite{Scholz52}.
Since its publication, Scholz's question and many of its variants
have been investigated by many researchers for the past 60 years.
Arguably, one of the main open problems on spectra is 
the one asked by Asser,
known as {\em Asser's conjecture},
whether the complement of a spectrum is also a spectrum~\cite{Asser55}.

Though seemingly unrelated, it turns out that
the notion of spectra has a tight connection with complexity theory.
In fact, Asser's conjecture is shown to be equivalent to 
the problem $\ne$ vs. $\cone$\footnote{$\ne$ is the class of languages accepted
by a non-deterministic (and possibly multi-tape) Turing machine with run time
$O(2^{kn})$, for some constant $k > 0$.},
when Jones and Selman, as well as Fagin independently
showed that a set of integers is a spectrum if and only if
its binary representation is in $\ne$~\cite{JS74,Fagin73,Fagin}.
It also immediately implies that if
Asser's conjecture is false,
i.e., there is a spectrum whose complement is not a spectrum,
then $\np\neq\conp$, hence $\np\neq \p$.

In this paper we study the following hierarchy of spectra,
which we call the variable hierarchy:
For every integer $k \geq 1$, define
\begin{eqnarray*}
\spec_{k} & = & \{\spec(\Phi)\mid \Phi \ \mbox{uses only up to} \ k \ \mbox{variables}\}
\end{eqnarray*}
Obviously we have 
$\spec_{1}\subseteq \spec_2 \subseteq \cdots$.
It was conjectured that
the variable hierarchy collapses to three variables,
due to the fact that three variables are enough
to describe the computation of a Turing machine.
For more discussion on this conjecture, we refer the reader
to a recent survey by Durand, et. al~\cite{DJMM12}.

In this paper we show the opposite:
The variable hierarchy has an infinite number of levels.
That is,
for every $k \geq 3$, $\spec_{k} \subsetneq \spec_{2k+2}$
(Corollary~\ref{cor:var-hierarchy}).
Here we should note that it is already known that 
$\spec_{1}\subsetneq \spec_2 \subsetneq \spec_3$.
More discussion is provided in the next section.

Our proof follows from the following observations.
\begin{itemize}\itemsep=0pt
\item
To describe a computation of a non-deterministic Turing machine with runtime
$O(N^k)$ -- for a fixed integer $k \geq 1$ -- 
with a first-order sentence acting on a structure of cardinality $N$, 
$2k+1$ variables are sufficient.
\item
Conversely, for each first-order sentence $\Phi$ with $k$ variables, 
checking whether a structure of cardinality $N$ is a model of $\Phi$ 
can be done on a non-deterministic Turing machine 
in time $O(N^k(\log N)^2)$~\cite{Grandjean84,Grandjean85,GrandjeanO04}.
\end{itemize}
Curiously, despite the infinity of the variable hierarchy,
by standard padding argument, our proof implies that 
the class of first-order spectra is closed under complement if and only if
the complement of every spectrum of three-variable sentence (using only binary relations)
is also a spectrum (Corollary~\ref{cor:padding}).
This means that to settle Asser's conjecture,
it is sufficient to consider only three-variable sentences
using only binary relations.

This paper is organised as follows.
In Section~\ref{sec:related-works} we discuss some related results.
In Section~\ref{sec:easy-hierarchy}
we present a rather loose hierarchy:
for every integer $k \geq 3$, $\spec_{k}\subsetneq \spec_{2k+3}$.
Then in Section~\ref{sec:var-hierarchy}
we show that by more careful book-keeping,
we obtain a tighter hierarchy: 
For every integer $k \geq 3$, $\spec_{k}\subsetneq \spec_{2k+2}$.
In Section~\ref{sec:np-soe} we briefly discuss how our results
can be translated to the setting of generalised spectra.
We conclude with Section~\ref{sec:conclusion}.

\section{Related works}
\label{sec:related-works}

In this section we will briefly review the spectra problem
and discuss some related results.
We refer the reader to a recent survey by Durand, et. al. 
for a more comprehensive treatment on the spectra problem
and its history~\cite{DJMM12}.
Fagin's paper~\cite{Fagin93} covers nicely the relation between
the spectra problem
and finite model theory and its connection with descriptive complexity.

First, we remark that our result $\spec_{k}\subsetneq \spec_{2k+2}$,
for each integer $k\geq 3$ complements previous known result 
that $\spec_{1}\subsetneq \spec_{2} \subsetneq \spec_3$~\cite{DJMM12},
which can be proved as follows.
First, a model of first-order sentence with only one variable
remains a model after cloning elements, thus $\spec_1$ only
includes the empty set, and sets of form $\{n: n\geq k\}$.
In another paper we show that the class of spectra
of two-variable logic with counting quantifiers 
is exactly the class of semilinear sets,
and closed under complement~\cite{KT}. Using the same methods, one can
show that $\spec_2$ is the class of finite and cofinite sets,
thus separating $\spec_2$ from $\spec_1$.
On the other hand, three variables are enough to simulate
an arbitrary Turing machine, so it is not difficult to construct
a set in $\spec_3$ which is not even semilinear,
say, e.g., 
$\{n^2 \mid n \ \text{is the length of an accepting run of a Turing machine} \ M\}$,
hence, separating $\spec_3$ from $\spec_2$.

Related to the variable hierarchy is the arity hierarchy.
Let $\spec(\arity\ k)$ denote the spectra of first-order sentences
using only relations of arity at most $k$.
Fagin showed that 
if there exists $k$ such that $\spec(\arity\ k) =\spec(\arity \ k+1)$,
the arity hierarchy collapses to $k$~\cite{Fagin75}.

Lynch showed that $\ntime {N^k} \subseteq \spec(\arity\ k)$,
where $\ntime {N^k}$ denotes the class of sets of positive integers
(written in unary form) accepted by non-deterministic multi-tape
Turing machine in time $O(N^k)$, where $N$ is the input integer~\cite{Lynch82}.
The converse is still open and seems difficult.
A proof for $\spec(\arity\ k)\subseteq \ntime {N^k}$
seems to require that model checking for first-order sentences
(of arity $k$) on structures with universe of cardinality $N$
can be done in $\ntime {N^k}$.
However, a result by Chen, et. al.
states that checking whether a graph of $N$ vertices contains a $k$-clique,
which is of constant arity $2$,
cannot be done in time $O(N^{o(k)})$ unless the exponential time hypothesis fails~\cite{ChenHKX04,ChenHKX06,ImpagliazzoP99}.

Another body of related works is
those by Grandjean, Olive and Pudlak
which established the variable hierarchy
for spectra of sentences using relation and function 
symbols~\cite{Grandjean84,Grandjean85,Grandjean90,GrandjeanO04,Pudlak75}.
Let $\fspec_k$ denote the spectra of first-order sentences using up to $k$ variables
with vocabulary consisting of relation and function symbols,
and $\fspec (k\forall)$ denote the restriction of $\fspec_k$
to sentences written in prenex normal form with universal quantifiers only
and using only $k$ variables.
In his series of papers, Grandjean showed that
$\nram {N^k} = \fspec(k\forall)$, for each positive integer $k$,
where $\nram {N^k}$ denotes the class of sets of positive integers
accepted by non-deterministic RAM in time $O(N^k)$, 
and $N$ is the input integer~\cite{Grandjean84,Grandjean85,Grandjean90}.
By Skolemisation, it is shown that
$\fspec_k = \fspec(k\forall) = \nram {N^k}$, for all $k\geq 1$~\cite[Theorem~3.1]{GrandjeanO04}.
Combined with Cook's hierarchy of non-deterministic time~\cite{Cook73}
and the known inclusions 
$\ntime {T(n) \log T(n)} \subseteq \nram {T(n)} \subseteq {T(n) \log^2 T(n)}$,
for each function $T(n) \geq n$,
see~\cite{Grandjean85},
it implies $\fspec_{k} \subsetneq \fspec_{k+1}$, for all $k\geq 1$.

This does not imply our hierarchy here: $\spec_{k}\subsetneq \spec_{2k+2}$.
Obviously every function can be translated into a relation in first-order logic.
However, such translation requires at least one new variable for each function.
It is not clear whether there is a translation in which 
the number of new variables introduced
depends only on the arity of the functions, and not on the number of functions.
At this point we should also remark that $\fspec_k = \fspec(k\forall)$
can be much more expressive than $\spec_k$.
Take, for example, $k=1$.
The class $\spec_1$ consists of only empty set and sets of the form
$\{n,n+1,\ldots\}$,
whereas the class $\fspec(1\forall)$ contains 
PRIMES, the set of prime numbers~\cite{Grandjean90}.

\section{An easier hierarchy}
\label{sec:easy-hierarchy}

For a positive integer $N$,
we write $\bin(N)$ to denote its binary representation.
Correspondingly, for a set $A \subseteq \bbN$,
we write $\bin(A) \subseteq \{1,0\}^*$ to denote the set of 
the binary representations of the numbers in $A$.
To make comparison between languages and sets of positive integers,
for a function $T: \mathbb{N} \to \mathbb{N}$,
we define $\ntime {T(n)}$ to be the class of
sets of positive integers whose binary representations
are accepted by a non-deterministic (possibly multi-tape) Turing machine (NTM) 
with run time $O(T(n))$.
The class $\ne$ denotes $\bigcup_{k > 0} \ntime {2^{kn}}$.

Note that our definition implies that languages in $\ntime {T(n)}$
consist of strings that start with $1$.
This does not effect the generality of our results here.
For every language $L$, we can define $L' = \{1\} \cdot L$
and any Turing machine that accepts $L$ can be easily modified
to one that accepts $L'$ without any change in complexity.

In the following 
for a positive integer $n$,
we let $[n] = \{0,1,2,\ldots,n-1\}$. The proof of the following Proposition
will set a framework, which we will use again later in the proofs of
Theorems \ref{theo:ntime-spec} and \ref{theo:ntime-spec-half}.


\begin{proposition}
\label{prop:three-var}
$\ntime {2^{n}} \subseteq \spec_{3}$.
More precisely, for every set of positive integers $A$
where $\bin(A) \in \ntime {2^{n}}$,
there is a first-order sentence $\Phi$ 
using only three variables and binary relations such that
$\spec(\Phi) = A$.
\end{proposition}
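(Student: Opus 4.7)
The plan is to reduce $\bin(A)\in\ntime{2^n}$ to a linear-time, unary-input problem and then encode a full accepting computation of the resulting NTM into the binary relations of a structure of size $N$, interpreting each pair $(x,y)$ of universe elements as the cell at time $x$ and tape position $y$. Starting from a multi-tape NTM $M$ accepting $\bin(A)$ in time $O(2^n)$, I would first build a multi-tape NTM $M'$ accepting $\{1^N:N\in A\}$ in linear time $O(N)$: on input $1^N$, the machine $M'$ computes $\bin(N)$ by repeatedly halving its unary tape (the $i$-th halving costs $O(N/2^i)$ steps, summing to $O(N)$) and then simulates $M$ on $\bin(N)$ in $O(2^n)=O(N)$ further steps; a standard linear-speedup then lets me assume that both the time and the space of $M'$ are at most $N$. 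The virtue of working with $M'$ rather than $M$ is that the unary input $1^N$ is already implicit in the size of the universe, so no numerical predicate such as $\relBIT$ needs to be axiomatized.

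The vocabulary of $\Phi$ consists of a binary order $<$, and, for each tape $k$ of $M'$ and each (possibly state-extended) tape symbol $\sigma$, a binary relation $T^{(k)}_\sigma(x,y)$ read as ``at time $x$, position $y$ on tape $k$ carries $\sigma$''; in addition, degenerate binary relations that are forced to be constant in their second argument broadcast the current state, the current head symbols on each tape, and a non-deterministic transition choice at each time step. The sentence $\Phi$ is the conjunction of (a) axioms making $<$ a linear order, from which $\suc$, $\relMIN$, $\relMAX$ are three-variable definable; (b) an initial-configuration clause saying that at time $\relMIN$ the input tape carries $1$ at every position with its head at $\relMIN$ in state $q_0$ and every other tape is blank; (c) local transition clauses asserting consistency with $M'$'s transition function between each time step and the next; (d) the acceptance clause $\exists x\exists y\, T^{(1)}_{(\sigma,q_{\mathrm{acc}})}(x,y)$.

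The technical heart is clause (c): a direct statement of a transition mentions the five individuals $(x,y-1)$, $(x,y)$, $(x,y+1)$, $(z,y)$ with $z=\suc(x)$, which exceeds three variables. I would side-step this by introducing, for each tape $k$ and each relevant symbol triple $\bar\sigma=(\sigma_{-1},\sigma_0,\sigma_{+1})$, an auxiliary binary relation $P^{(k)}_{\bar\sigma}(x,y)$ defined by
\[
P^{(k)}_{\bar\sigma}(x,y)\ \IFF\ T^{(k)}_{\sigma_0}(x,y)\AND\exists z(\suc(z,y)\AND T^{(k)}_{\sigma_{-1}}(x,z))\AND\exists z(\suc(y,z)\AND T^{(k)}_{\sigma_{+1}}(x,z)),
\]
which uses only the three variables $x,y,z$. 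Each transition constraint then takes the three-variable form $\forall x\forall y\forall z\bigl[(\suc(x,z)\AND\Psi(x,y))\ra T^{(k)}_\tau(z,y)\bigr]$, where $\Psi(x,y)$ is a conjunction of the broadcasted state/head-symbol/choice relations at $(x,y)$ together with the appropriate $P^{(k)}_{\bar\sigma}(x,y)$, and where $\tau$ is the tape-$k$ symbol these data uniquely determine at position $y$ at time $z$; boundary cases $y=\relMIN,\relMAX$ are handled by parallel clauses in which the missing neighbour is replaced by a blank symbol. The part I expect to be hardest is the uniform bookkeeping: verifying that every one of (a)--(d) and every auxiliary-relation definition really fits into three variables, and that absorbing the constants hidden in the $O(N)$ time/space bounds of $M'$ via tape-alphabet speed-up does pack the entire $M'$-computation into the $N\times N$ grid of pairs supplied by the universe. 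Granting this, a size-$N$ model of $\Phi$ exists if and only if the existentially guessed relations encode an accepting run of $M'$ on $1^N$, i.e.\ if and only if $N\in A$, so $\spec(\Phi)=A$.
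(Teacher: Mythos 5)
Your construction of the space--time grid itself (binary labelling relations, broadcast of the global state, auxiliary relations $P^{(k)}_{\bar\sigma}$ to package the three neighbouring symbols, three-variable transition clauses) matches the paper's in all essentials; the genuine difference is how the input is fed to the machine. The paper keeps the original machine $M$ running on $\bin(N)$ and pays for this by axiomatising, inside the structure, the binary representation of $N$ via the arithmetic predicates $\relDOUBLE$, $\relHALF$, $\relDIV$, $\relBIT$ and $\relINPUT$. You instead precompose with a unary-to-binary converter, so the initial configuration is just ``all ones,'' and no numerical predicates are needed. That is an attractive trade, but it has a cost you understate: the linear speed-up theorem cannot push the running time of $M'$ below its input length, and $M'$ must scan $1^N$ (at least $N$ steps) \emph{and then} convert and simulate, so its time is $cN$ for some $c\geq 2$, not $\leq N$. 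Hence the space--time diagram is a $[cN]\times[cN]$ grid and does not literally fit on the $N\times N$ grid of pairs; you must block the grid, letting each universe element stand for $c$ consecutive times and $c$ consecutive cells. The blocking itself preserves locality (in $c$ steps heads move at most $c$ cells, so a time-block still depends only on the three adjacent space-blocks) and stays within three variables, but the blocked initial configuration now reads ``cells $0,\dots,N-1$ carry $1$, cells $N,\dots,cN-1$ are blank,'' and identifying the boundary block $\lfloor N/c\rfloor$ and the residue $N\bmod c$ reintroduces a small amount of definable arithmetic (a three-variable inductive predicate for $y=cx$, in the style of the paper's $\relDOUBLE$). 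The paper sidesteps all of this because its machine's input has length $\log N$, so speed-up genuinely brings the time under $N$. So: your route works and is arguably cleaner on the input-verification side, but the sentence ``a standard linear speed-up then lets me assume that both the time and the space of $M'$ are at most $N$'' is false as stated and needs to be replaced by the blocking argument above, which in turn brings back a light version of the arithmetic you were trying to avoid.
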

\begin{proof}
The proof is via the standard encoding of an accepting run 
of an NTM with a square grid representing the space-time diagram.
Let $A$ be a set of positive integers, where $A\in \ntime {2^n}$. 
Let $M$ be a $t$-tape NTM accepting $\bin(A)$ in time $O(2^n)$ and space $O(2^n)$;
or, equivalently, for every $N \in A$, 
$M$ accepts $\bin(N)$ in time and space $O(N)$.
By linear speed-up~\cite[Theorem~2.2]{Papadimitriou94},
we can assume that $M$ accepts $\bin(N)$ in time and space $\leq N$.
This assumes that $N$ is big enough (greater than some $N_0$),
and this is not a problem for spectra -- numbers 
smaller than $N_0$ can always be considered on a case-by-case basis. 

For $N\in A$,
the accepting run of $M$ on $\bin(N)$ can be described
as a square-grid $[N]\times [N]$, where each point $(x,y) \in [N]\times [N]$
depicting cell $x$ in time $y$ is labelled according to the transitions of $M$.
We will construct a first-order sentence $\Phi$ such that 
the models of $\Phi$ are precisely such grids encoded as first-order structures 
of the universe $[N]$ with binary relations representing the labels of points 
$(x,y)\in [N]\times [N]$, and therefore, $\spec(\Phi)=A$.

The sentence $\Phi$ will be a conjunction
of \emph{axioms} which confirm that various parts of the model work as expected.
The proof will consist of two parts.
\begin{itemize}\itemsep=0pt
\item
Depicting the computation of $M$ with just three variables.

Essentially, in this part we want to describe that
the labels on the points $(x-1,y)$, $(x,y)$, $(x+1,y)$ and the labels on
its surrounding points $(x-1,y+1)$, $(x,y+1)$ and $(x+1,y+1)$
must ``match'' according to the transitions of $M$.
\item
Verifying that the input to $M$ is the binary representation of the
cardinality of the universe.
\end{itemize}
The details are as follows.

{\bf Depicting the computation of $M$ with just three variables.}
We first declare a successor $\suc$ and a total ordering $<$ on the
universe using three variables; this allows us to identify the universe
with $[N]$, and is done simply by adding the well-known 
total order and successor axioms to $\Phi$. 
The predicates
$\relMIN(x)$ and $\relMAX(x)$ state that
$x$ is the minimal and maximal element (0 and $N-1$), respectively. 

For a formula $\phi(x,y)$ with two free variables $x$ and $y$,
we take the third variable $z$, and define 
the operators $\Delta_h \phi(x,y)$, $\overline{\Delta}_h \phi(x,y)$ and $\Delta_v \phi(x,y)$,
where $h$ and $v$ stand for horizontal and vertical, respectively, as follows:
\begin{eqnarray*}
\Delta_h\phi (x,y) & := &  \forall z\ \suc(x,z) \Rightarrow \phi(z,y)
\\
\overline{\Delta}_h\phi (x,y) & := &  \forall z\ \suc(z,x) \Rightarrow \phi(z,y)
\\
\Delta_v\phi (x,y) & := &  \forall z\ \suc(y,z) \Rightarrow \phi(x,z)
\end{eqnarray*}
It is straightforward to see that for every $(x,y)$ when 
$x$ is not the minimal and the maximal elements
and $y$ is not the maximal elements,
\begin{itemize}\itemsep=0pt
\item
$\Delta_h \phi(x,y)$ holds if and only if $\phi(x+1,y)$ holds;
\item
$\overline{\Delta}_h \phi(x,y)$ holds if and only if $\phi(x-1,y)$ holds;
\item
$\Delta_v \phi(x,y)$ holds if and only if $\phi(x,y+1)$ holds.
\end{itemize}

Let the alphabet of $M$ be $\Sigma$, and $Q$ be the set of states  of $M$. 
We will require the following relations to simulate the machine:
\begin{itemize}\itemsep=0pt
\item 
$\relSYMBOL^i_a (x,y)$,
which holds if and only if the $x$-th cell of the $i$-th tape contains
the symbol $a$ at time $y$.
\item 
$\relSTATE^i_q (x,y)$, which holds if and only if 
the head on the $i$-th tape at time $y$
is over the $x$-th cell, and the state is $q$.
\end{itemize}

Now, to make sure that $\Phi$ depicts a computation of $M$ correctly,
we state the following:
{\em On every ``step'' $y=0,\ldots,N-1$,
if the heads are in states $q_1,\ldots,q_t$,
then for every cell $x=0,\ldots,N-1$,
the labels on $(x-1,y)$, $(x,y)$, $(x+1,y)$ and the labels on
$(x-1,y+1)$, $(x,y+1)$ and $(x+1,y+1)$ 
must ``match'' according to the transitions of $M$}.

Formally, it can be written as follows.
\begin{eqnarray*}
\bigwedge_{\vq=(q_1,\ldots,q_t) \in Q^t}
& \forall y &
\Bigg(
\Big(
\bigwedge_{1\leq i \leq t}
\exists x \ 
\relSTATE^i_{q_i} (x,y)
\Big)  \to 
\Big(
\bigwedge_{\phi} \
\forall x \
 \phi(x,y) 
\ \to \
\psi_{\phi,\vq}(x,y)
\Big)
\Bigg)
\end{eqnarray*}
where the intuitive meaning of $\phi$ and $\psi_{\phi,\vq}$ are as follows.
\begin{itemize}\itemsep=0pt
\item
The $\phi$ in the conjunction $\bigwedge_{\phi}$ runs through
all possible labels of $(x-1,y)$, $(x,y)$ and $(x+1,y)$,
where each $\phi$ is of form:
$$
\overline{\Delta}_h\ \ell ab_1(x,y)
\ \wedge \ \ell ab_2(x,y) \ \wedge \  \Delta_{h}\ \ell ab_3(x,y)
$$
Intuitively it means that $(x-1,y)$, $(x,y)$ and $(x+1,y)$
are labelled with $\ell ab_1$, $\ell ab_2$ and $\ell ab_3$, respectively,
where each $\ell ab_1$, $\ell ab_2$ and $\ell ab_3$ is 
a conjunction of the atomic relations $\relSTATE^i_q$,
and $\relSYMBOL^i_a$, as well as $\relMIN$ and $\relMAX$, 
and their negations to indicate whether $x$ or $y$ is the minimal or maximal element.
\item
The formula $\psi_{\phi,\vq}(x,y)$ is a disjunction of all possible labels
on the points $(x-1,y+1)$, $(x,y+1)$ and $(x+1,y+1)$
according to the transitions of $M$,
when the points $(x-1,y)$, $(x,y)$ and $(x+1,y)$ satisfy $\phi$ and
the states of the heads are $\vq = (q_1,\ldots,q_t)$.
Formally, $\psi_{\phi,\vq}(x,y)$ is of form:
\begin{eqnarray*}
\psi_{\phi,\vq}(x,y) & : = &
\Delta_{v} 
\Big(
\overline{\Delta}_h \psi_{\phi,\vq}'(x,y)
\ \wedge \
\psi_{\phi,\vq}''(x,y)
\ \wedge \
\Delta_h \psi_{\phi,\vq}'''(x,y)
\Big)
\end{eqnarray*}
where $\psi_{\phi,\vq}',\psi_{\phi,\vq}'',\psi_{\phi,\vq}'''$
are all the disjunctions of all possible labels on $(x-1,y+1)$, $(x,y+1)$ and $(x+1,y+1)$, respectively,
that are permitted by the transitions of $M$,
when the points $(x-1,y)$, $(x,y)$ and $(x+1,y)$ satisfy $\phi$ 
and the states of the heads are $\vq$.
\end{itemize}
Of course, we also have to state that 
{\em for every step $y = 0,\ldots,N-1$,
there are only $t$ heads}, i.e.
on every step $y = 0,\ldots,N-1$,
for every $i = 1,\ldots,t$,
there is exactly one cell $x$ where $(x,y)$ is labeled with $\relSTATE^i_{q}$. 
This is straightforward.

{\bf Verifying the input to the Turing machine.}
The input will be provided in binary. 
Recall that the elements of universe correspond to the numbers from $0$ to $N-1$.
We will need the following axioms.
\begin{itemize}\itemsep=0pt
\item 
The relation $\relDOUBLE(x,y)$ which holds if and only if $x=2y$.
It is defined inductively by $x=y=0$ and $(x-2) = 2(y-1)$.
$$
\forall x \forall y \
\left(
\begin{array}{c}
\relDOUBLE(x,y) 
\IFF
(\relMIN(x) \wedge \relMIN(y)) \ \vee \ 
\\
(\exists z \ (\suc(z,x) \wedge \exists x \ (\suc(x,z) \wedge \exists z (\suc(z,y) \wedge \relDOUBLE(x,z)))))
\end{array}
\right)
$$

\item 
The relation $\relHALF(x,y)$ which holds if and only if $x=\lfloor y/2 \rfloor$,
i.e. $y = 2x$ or $y = 2x + 1$.
$$
\forall x \forall y \
\left(
\relHALF(x,y) 
\IFF 
\relDOUBLE(y,x) \vee \exists z \ (\relDOUBLE(z,x) \wedge \suc(z,y) )
\right)
$$

\item 
The relation $\relDIV(x,y)$ which holds if and only if $x = \lfloor (N-1)/2^y\rfloor$.
It is defined inductively by $\floor {(N-1)/2^0} = N-1$ and 
$\floor{(N-1)/2^y} = \floor{\floor{(N-1)/2^{y-1}}/2}$).
$$
\forall x \forall y
\left(
\begin{array}{c}
\relDIV(x,y) \IFF (\relMAX(x) \wedge \relMIN (y))
\ \vee \
\\
\exists z(\suc(z, y)\wedge \exists y(\relDIV(y,z)\wedge \relHALF(x,y)))
\end{array}
\right)
$$

\item 
The relation $\relBIT(y)$ which holds 
if and only if the bit $b_y$ of the binary
representation $b_{N-1}\cdots b_1 b_0$ of $N-1$ is $1$, 
i.e., the integer $x = \floor{(N-1)/2^y}$ is odd.
$$
\forall y \
\left(
\relBIT (y)\IFF \exists x(\relDIV (x, y) \wedge \neg \exists z\relDOUBLE(x,z))
\right)
$$
\end{itemize}
Finally, notice that because the relation $\relBIT$ encodes 
the binary representation of $N-1$,
the relation denoted by $\relINPUT$ that encodes the input string, i.e., 
the binary representation of $N$, is defined by the following axiom:
$$
\exists x
\left(
\begin{array}{l}
\neg \relBIT (x) \ \wedge \  \relINPUT (x) \ \wedge
\left(
\begin{array}{l}
\forall y < x \ (\relBIT (y)\wedge\neg \relINPUT(y))\ \wedge
\\
\forall y > x \ (\relINPUT(y) \IFF \relBIT (y))
\end{array}
\right)
\end{array}
\right)
$$
This completes our proof of Proposition~\ref{prop:three-var}. 
\end{proof}

Proposition~\ref{prop:three-var} can be generalised to $\ntime {2^{kn}}$
as stated in the following theorem.
\begin{theorem}
\label{theo:ntime-spec}
For every integer $k \geq 1$, $\ntime {2^{kn}} \subseteq \spec_{2k+1}$.
\end{theorem}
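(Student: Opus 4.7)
The plan is to generalize the proof of Proposition~\ref{prop:three-var} by simulating an NTM $M$ that runs in time and space $O(N^k)$ on $\bin(N)$ via a space-time grid of size $N^k\times N^k$. The universe is still $[N]$, but now each grid coordinate is encoded as a $k$-tuple $\vx=(x_1,\ldots,x_k)\in[N]^k$, interpreted lexicographically as an integer in $[N^k]$. Thus the space coordinate uses $k$ variables, the time coordinate uses $k$ more, and one additional scratch variable $z$ is reserved for successor lookups, for a total of $2k+1$. The relations $\relSYMBOL^i_a$ and $\relSTATE^i_q$ become $2k$-ary.

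The key technical step is to lift the operators $\Delta_h$, $\overline{\Delta}_h$ and $\Delta_v$ from single elements to $k$-tuples while keeping the variable count at $2k+1$. The lexicographic successor $\vx+1$ splits into $k$ cases indexed by the position $i$ where the carry stops: in case $i$ we have $\relMAX(x_{i+1})\wedge\cdots\wedge\relMAX(x_k)$, and the successor equals $(x_1,\ldots,x_{i-1},x_i+1,0,\ldots,0)$. To describe $\phi$ at this new position without introducing fresh variables for the resulting tuple, we use $z$ to capture $\suc(x_i,z)$ and then \emph{existentially re-bind} the variables $x_i,x_{i+1},\ldots,x_k$ one at a time: first $\exists x_i\,(x_i=z\wedge\cdots)$, then $\exists x_{i+1}\,(\relMIN(x_{i+1})\wedge\cdots)$, and so on. All quantifiers draw only from the $2k+1$ named variables. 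The full operator $\Delta_h\phi$ is the disjunction of these $k$ cases, and $\overline{\Delta}_h$ (lex predecessor on $\vx$) and $\Delta_v$ (lex successor on $\vy$) are analogous. Since $z$ is always bound before the resulting formula escapes an operator, iteration and nesting of these operators remain within the $2k+1$ budget.

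With the lifted delta operators in hand, the transition axiom of Proposition~\ref{prop:three-var} lifts almost verbatim with $\vx,\vy$ in place of $x,y$: the outer $\forall\vy$ (that is, $\forall y_1\cdots\forall y_k$) and inner $\forall\vx$ consume the $2k$ named variables, and every $\Delta$-operator internally reuses $z$. The one-head-per-step and head-existence constraints lift analogously using $\exists\vx$ for $\exists x$. The input-encoding axioms for $\relDOUBLE$, $\relHALF$, $\relDIV$, $\relBIT$ and $\relINPUT$ remain binary relations on $[N]$ and are inherited unchanged, since $\bin(N)$ has length at most $\lfloor\log_2 N\rfloor+1\le N$ and can be placed on the first tape at time $\vzero$ at grid positions of the form $(0,\ldots,0,x_k)$, addressed by the last coordinate alone.

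The main obstacle is precisely the lift of $\Delta_h$, $\overline{\Delta}_h$ and $\Delta_v$ within $2k+1$ variables: the naive attempt of using $k$ fresh quantifiers for the successor tuple would cost $3k+1$ variables, and it is the re-binding trick applied to the carry-stopping position $i$ that keeps the count at $2k+1$. Once that is set up, the remainder is a mechanical lift of the three-variable proof.
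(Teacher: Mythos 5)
Your proposal is correct and follows essentially the same route as the paper: encode $[N^k]$ coordinates as $k$-tuples over $[N]$, make $\relSYMBOL$ and $\relSTATE$ $2k$-ary, and implement the lexicographic-successor shift operators with a single scratch variable $z$ via a case split on where the carry stops, re-binding the reset coordinates in place. The paper's only cosmetic difference is that it binds one variable to $\relMIN$ and reuses it for all reset positions rather than re-binding them one at a time, which changes nothing in the $2k+1$ count.
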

\begin{proof}
The proof follows the same outline as the proof of Proposition \ref{prop:three-var}. 
Let $A$ be a set of positive integers such that $\bin(A) \in \ntime{2^{kn}}$ and 
$M$ be a $t$-tape NTM accepting $\bin(A)$ in time $N^k$ and space $N^k$.
So the space-time diagram is an $[N^k]\times [N^k]$ grid.

We identify numbers in $[N^k]$ with vectors 
$(p_k, p_{k-1} , \ldots, p_1) \in [N]^k$. 
The lexicographical successor relation 
$\suc(p_k,\ldots, p_1 , q_k ,\ldots, q_1)$ can be defined as
$1+ \sum_i p_i N^{i-1} = \sum_i q_i N^{i-1}$.


As in the proof of Proposition~\ref{prop:three-var},
the first-order sentence essentially states the following:
{\em On every ``step'' $\vy \in [N]^k$,
if the heads are in states $q_1,\ldots,q_t$,
then for every cell $\vx \in [N]^k$,
the labels on $(\vx'',\vy)$, $(\vx,\vy)$ and $(\vx',\vy)$ and the labels on
$(\vx'',\vy')$, $(\vx,\vy')$ and $(\vx',\vy')$ 
must ``match'' according to the transitions in $M$},
where $\vx'$ and $\vy'$ are the lexicographical successors of $\vx$ and $\vy$, respectively,
and $\vx''$ is the lexicographical predecessor of $\vx$.

Accordingly, the relations
$\relSYMBOL^i_a$ and $\relSTATE^i_q$ are of arity $2k$. 
The only significant difference is the shift operators
$\overline{\Delta}_h$, $\Delta_{h}$ and $\Delta_{v}$ 
which use only one extra variable, $z$, in their expansion.
Let $\vx= (x_k,\ldots,x_1)$ and $\vy=(y_k,\ldots,y_1)$.
The operator $\Delta_{h}$ is defined on any formula $\phi(\vx,\vy)$ as follows:
\begin{eqnarray*}
\Delta_h \phi(\vx, \vy) & := &
\bigvee^k_{i=2} \ \exists z \
\bigwedge^{i-1}_{j=1} 
\left(
\begin{array}{c}
\relMAX(x_j) \ \wedge \ \suc(x_i,z) \ \wedge \
\\ 
\exists x_1 (\relMIN(x_1) \wedge \phi(x_k,\ldots, x_{i+1}, z, x_1 ,\ldots, x_1 , \vy))
\end{array}
\right)
\\
& & \vee
\Big(\exists z \ \suc(x_1,z) \ \wedge \ \phi(x_k,\ldots,x_2,z,\vy)\Big)
\end{eqnarray*}
The operators $\overline{\Delta}_h$ and $\Delta_{v}$ can be defined in a similar manner.
As previously, it is straightforward to see that 
\begin{itemize}\itemsep=0pt
\item
$\Delta_{h} \phi(\vx,\vy)$ holds 
if and only if
$\phi(\vx',\vy)$ holds, where $\vx'$ is the lexicographical successor of $\vx$, and
\item
$\overline{\Delta}_{h} \phi(\vx,\vy)$ holds 
if and only if
$\phi(\vx',\vy)$ holds, where $\vx$ is the lexicographical successor of $\vx'$, and
\item 
$\Delta_{v} \phi(\vx,\vy)$ holds 
if and only if
$\phi(\vx,\vy')$ holds, where $\vy'$ is the lexicographical successor of $\vy$.
\end{itemize}
This completes the definition of the space-time grid structure, and thus completes
our proof of Theorem~\ref{theo:ntime-spec}.
\end{proof}

Next, we recall a result by Grandjean
which states that $k$-variable spectra, 
even if we use function symbols, can be computed effectively.

\begin{theorem}[\cite{Grandjean84,GrandjeanO04}]
\label{theo:spec-ntime}
For every 
\\
integer $k \geq 1$,
$\fspec_k \subseteq \ntime {n^2 2^{kn}}$.
\end{theorem}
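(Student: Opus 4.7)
The plan is to build, for each sentence $\Phi \in \fspec_k$, a non-deterministic multi-tape Turing machine that, on input $\bin(N)$ of length $n$ (so $N = \Theta(2^n)$), decides in time $O(n^2 \cdot 2^{kn})$ whether $\Phi$ has a model of cardinality $N$. The first step is a preprocessing of $\Phi$ into a purely relational form of small arity. The key observation is that, since $\Phi$ uses only $k$ distinct variables, every term and every atomic subformula of $\Phi$ mentions at most $k$ distinct universe elements; so each function symbol of arity $r$ can be replaced by its graph, a relation of arity $r+1$, together with a functionality axiom, and by collapsing argument positions whose variables coincide we may further assume that every relation symbol has arity at most $k$. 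This rewriting preserves the property that only $k$ variables are used (one reuses the existing pool of variables rather than introducing fresh ones, as in Grandjean's normal forms), so the entire structure is now described by $O(N^k)$ bits while $|\Phi|$ stays a fixed constant.

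The main algorithm then proceeds in two phases. In the guessing phase, the NTM non-deterministically writes the interpretation of each relation symbol to a work tape in time $O(N^k)$ and verifies the functionality axioms by a single bottom-up scan, also in $O(N^k)$. In the evaluation phase, I would model-check $\Phi$ bottom-up: for every subformula $\psi(x_{i_1},\ldots,x_{i_j})$, maintain a table $T_\psi$ of size $N^j \leq N^k$ storing, for each assignment $\va \in [N]^j$, whether $\psi(\va)$ holds. Atomic subformulas are read off from the guessed structure. Boolean connectives are realised as bitwise operations on tables of matching shape, possibly padded by duplicating coordinates to align variable lists. An existential quantifier $\exists x_i\,\psi$ is handled by OR-ing the $N$ slices of $T_\psi$ along coordinate $i$, yielding a table of one fewer dimension. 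Each subformula contributes $O(N^k)$ RAM operations, and since $|\Phi|$ is a fixed constant the whole evaluation takes $O(N^k)$ RAM time.

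Finally, simulating this RAM computation on a multi-tape NTM costs an overhead of $O(\log^2 N)$ per RAM step, since indices into the $N^k$-entry tables have length $O(\log N) = O(n)$ and addressing them on a tape takes quadratic time in the address length. This yields the required running time $O(N^k \cdot (\log N)^2) = O(n^2 \cdot 2^{kn})$. The step I expect to be most delicate is the variable-preserving reduction from the functional signature to a purely relational one of arity at most $k$: directly replacing a term $f(x_1,\ldots,x_r)$ by $R_f(x_1,\ldots,x_r,y)$ introduces a fresh variable $y$, and one must use the standard trick from Grandjean's treatment of reusing the $k$ variables already present, which requires a careful rewriting of the quantifier structure of $\Phi$. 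A secondary concern is making the $O(\log^2 N)$ addressing overhead uniform across the guessing and evaluation phases, so that it does not combine unfavourably with the bottom-up recursion and produce something worse than $n^2 \cdot 2^{kn}$.
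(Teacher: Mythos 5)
First, note that the paper does not actually prove this theorem: it is imported from Grandjean and Grandjean--Olive. The paper's own self-contained argument (Proposition~\ref{prop:normalisation} together with Corollary~\ref{cor:spec-ntime}) covers only the purely relational case $\spec_k\subseteq\ntime{n^2 2^{kn}}$, and it does so by a different route from yours: normalise the sentence to an existential second-order form with relations of arity $\le k$ and prenex first-order conjuncts, ground it over $[N]$ into a propositional formula of size $O(N^k)$ that is a conjunction of DNFs, and decide satisfiability non-deterministically by picking one disjunct per conjunct and checking consistency of the resulting literal list by lexicographic sorting. Your bottom-up table-based model checking is a legitimate alternative evaluation strategy for the relational case, but your proposal claims the full functional statement, and there the first step is where it breaks.

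The gap is precisely the one the paper flags in Section~\ref{sec:related-works}: replacing a function by its graph is \emph{not} known to be doable without extra variables, and it is not a matter of ``reusing the existing pool.'' Concretely, a $k$-variable sentence may contain $f(x_1,\ldots,x_k)$ with all $k$ variables distinct; its graph $R_f$ then has arity $k+1$ with nothing to collapse, an atom such as $P(f(x_1,\ldots,x_k))$ needs a fresh variable for the value position, and nested terms compound the problem. Stored as a bit-table, a $(k+1)$-ary relation has $N^{k+1}$ entries, so your guessing phase already exceeds the $O(N^k)$ budget. Grandjean's actual proof never relationalises: it keeps function symbols, guesses function \emph{tables} ($O(N^k\log N)$ bits), evaluates terms as value tables, and grounds to a set of clauses checked by sorting. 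Your argument would be repaired by doing the same --- treat terms functionally throughout rather than as existentially quantified relational atoms. A second, smaller flaw: a single random access into an $N^k$-entry tape costs $\Theta(N^k)$ head movement on a multi-tape Turing machine, not $O(\log^2 N)$; the $\log^2$ overhead is only achievable globally, via the sorting-based simulation $\nram{T(n)}\subseteq\ntime{T(n)\log^2 T(n)}$ that the paper cites (or by sorting the whole access sequence, as in Corollary~\ref{cor:spec-ntime}), so you should invoke that inclusion rather than a per-step addressing bound.
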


Combining Theorems~\ref{theo:ntime-spec} and~\ref{theo:spec-ntime},
we obtain the following hierarchy:



\begin{corollary}
\label{cor:easy-var-hierarchy}
For every integer $k \geq 3$, $\spec_k \subsetneq \spec_{2k+3}$.
\end{corollary}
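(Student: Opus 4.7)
The plan is to sandwich $\spec_k$ and $\spec_{2k+3}$ between two non-deterministic exponential-time classes and then separate those classes via the non-deterministic time hierarchy theorem. On the upper side, Theorem~\ref{theo:ntime-spec} applied with parameter $k+1$ yields $\ntime{2^{(k+1)n}} \subseteq \spec_{2(k+1)+1} = \spec_{2k+3}$. On the lower side, every relational sentence is a special case of a sentence over relations and functions, so $\spec_k \subseteq \fspec_k$, and Theorem~\ref{theo:spec-ntime} then gives $\spec_k \subseteq \ntime{n^2 2^{kn}}$. It therefore suffices to establish the strict inclusion $\ntime{n^2 2^{kn}} \subsetneq \ntime{2^{(k+1)n}}$.

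For this step I would invoke the non-deterministic time hierarchy theorem of Seiferas, Fischer and Meyer, which asserts $\ntime{t_1(n)} \subsetneq \ntime{t_2(n)}$ whenever $t_1(n+1) = o(t_2(n))$. Taking $t_1(n) = n^2 2^{kn}$ and $t_2(n) = 2^{(k+1)n}$, one computes
\begin{eqnarray*}
\frac{(n+1)^2 \cdot 2^{k(n+1)}}{2^{(k+1)n}} & = & \frac{2^{k}(n+1)^2}{2^{n}} \; \longrightarrow \; 0 \quad \text{as } n \to \infty,
\end{eqnarray*}
so the hypothesis of the hierarchy theorem is comfortably satisfied. Chaining the three inclusions together then yields $\spec_k \subseteq \ntime{n^2 2^{kn}} \subsetneq \ntime{2^{(k+1)n}} \subseteq \spec_{2k+3}$, which is the desired strict inclusion.

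The argument is essentially a routine packaging of Theorems~\ref{theo:ntime-spec} and~\ref{theo:spec-ntime} with the hierarchy theorem, so there is no real conceptual obstacle; the only point requiring any attention is checking that the polynomial factor $n^2$ in the upper bound on $\spec_k$ is cheap enough compared with the extra exponential factor $2^n$ separating $2^{kn}$ from $2^{(k+1)n}$, which the computation above confirms. Note that the hypothesis $k \geq 3$ plays no role in the proof itself; the cases $k=1,2$ are presumably excluded only because they are already subsumed by the sharper chain $\spec_1 \subsetneq \spec_2 \subsetneq \spec_3$ recalled in Section~\ref{sec:related-works}.
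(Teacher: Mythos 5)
Your proof is correct and follows essentially the same route as the paper: the same chain $\spec_k \subseteq \ntime{n^2 2^{kn}} \subsetneq \ntime{2^{(k+1)n}} \subseteq \spec_{2k+3}$ obtained from Theorems~\ref{theo:spec-ntime} and~\ref{theo:ntime-spec}, with the middle strict inclusion supplied by a non-deterministic time hierarchy theorem (the paper cites Cook's version where you cite Seiferas--Fischer--Meyer, but the step is the same). Your explicit verification of the hypothesis $t_1(n+1)=o(t_2(n))$ and the remark that $k\geq 3$ is not needed for the argument itself are both accurate.
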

\begin{proof}
The strict inclusion follows from
$$
\spec_k  \subseteq 
\ntime{n^2 2^{kn}}  \subsetneq 
\ntime{2^{(k+1)n}}  \subseteq 
\spec_{2(k+1)+1}  =
\spec_{2k+3}.
$$
The first inclusion follows from Theorem~\ref{theo:spec-ntime}
and the third from Theorem~\ref{theo:ntime-spec}.
The second strict inclusion follows from
Cook's non-deterministic time hierarchy theorem~\cite[Theorem~3.2]{Cook73,AB09}.
\end{proof}

The following corollary shows that
to settle Asser's conjecture, it is sufficient
to consider sentences using three variables and binary relations.

Define the following class:
\begin{eqnarray*}
\cospec_3^{bin} & := &
\left\{
\begin{array}{l|l}
\bbN^+ - S &
\begin{array}{l}
S = \spec(\phi) \ \mbox{and} \ \phi \ \mbox{uses only}
\\ 
\mbox{three variables and binary relations}
\end{array} 
\end{array} 
\right\}
\end{eqnarray*}

\begin{corollary}
\label{cor:padding}
$\ne=\cone$ if and only if
$\cospec_3^{bin}\subseteq \spec$.
\end{corollary}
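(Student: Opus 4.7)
The forward direction is immediate from the Fagin/Jones--Selman identification $\spec = \ne$: assuming $\ne = \cone$, the class $\spec$ is closed under complement, so in particular $\cospec_3^{bin} \subseteq \cospec = \spec$. For the backward direction it suffices to show $\cospec \subseteq \spec$, and the plan is to reduce an arbitrary element of $\cospec$ to a member of $\cospec_3^{bin}$ by a padding argument so that the hypothesis applies directly.

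Take $A \in \cospec$, so $\bbN^+ - A \in \spec$. By Theorem~\ref{theo:spec-ntime} we have $\bin(\bbN^+ - A) \in \ntime{2^{kn}}$ for some $k \geq 1$. Define the padded set $\hat{A} = \{N^k : N \in A\}$. To check $\bin(\bbN^+ - \hat{A}) \in \ntime{2^m}$ on inputs of length $m$, a non-deterministic algorithm on input $\bin(M)$ first tests in polynomial time whether $M$ is a perfect $k$-th power, accepting if not; otherwise it computes $N = M^{1/k}$ and simulates the NTM for $\bin(\bbN^+ - A)$ on $\bin(N)$, taking time $O(2^{k|\bin(N)|}) = O(N^k) = O(M) = O(2^m)$. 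Proposition~\ref{prop:three-var} then yields that $\bbN^+ - \hat{A}$ is the spectrum of a sentence using only three variables and binary relations, so $\hat{A} \in \cospec_3^{bin}$, and by hypothesis $\hat{A} \in \spec$, witnessed by some sentence $\Psi$.

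To finish I unpad $\Psi$ to a sentence $\Phi$ whose models of cardinality $N$ are in natural bijection with models of $\Psi$ of cardinality $N^k$, via the standard tupling encoding under the bijection $[N^k] \leftrightarrow [N]^k$: each variable of $\Psi$ is replaced by a $k$-tuple of fresh variables, each quantifier by a block of $k$ quantifiers, each equality $x = y$ by $\bigwedge_{j=1}^{k} x^j = y^j$, and each atomic formula over an $r$-ary relation symbol by an atomic formula over a fresh $kr$-ary relation symbol. Then $N \in \spec(\Phi)$ iff $N^k \in \spec(\Psi) = \hat{A}$ iff $N \in A$, so $A \in \spec$ as required. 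The main obstacle is the complexity bookkeeping in the middle paragraph: the padding exponent must be chosen to match exactly the $k$ in the $\ntime{2^{kn}}$ bound, so that the extraction of $N$ from $M$ plus the simulation of the NTM on $\bin(N)$ fits within the linear-in-$2^m$ budget demanded by Proposition~\ref{prop:three-var}; the tupling step is then a routine translation that blows up arity and variable count but preserves membership in $\spec$.
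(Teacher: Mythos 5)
Your proof is correct, but the ``if'' direction takes a genuinely different route from the paper's. The paper stays at the machine level: from $\ntime{2^n}\subseteq\spec_3$ (binary relations) and $\spec\subseteq\ne$ it deduces that complements of $\ntime{2^n}$-sets lie in $\ne$, and then invokes the standard translational padding argument on \emph{strings} to conclude that all of $\ne$ is closed under complement, i.e.\ $\ne=\cone$. You instead prove the equivalent statement $\cospec\subseteq\spec$ entirely inside the world of spectra: you pad the \emph{integers} ($N\mapsto N^k$) so that the complement of the padded set falls into $\ntime{2^m}$ and hence, via Proposition~\ref{prop:three-var}, into the three-variable binary fragment where the hypothesis applies, and then you undo the padding with the tupling interpretation $[N^k]\leftrightarrow[N]^k$ (each variable becomes a $k$-tuple, each $r$-ary relation becomes $kr$-ary). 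The price of your route is this extra interpretation lemma --- that spectra are closed under inverse images of $N\mapsto N^k$ --- which the paper never needs; the payoff is that you obtain the closure of $\spec$ under complement as an explicit first-order construction rather than appealing to the Jones--Selman/Fagin correspondence a second time to transport the padding argument back. Your complexity bookkeeping is sound (the $n^2$ overhead of Theorem~\ref{theo:spec-ntime} is absorbed into $k$, root extraction is cheap, and the simulation on $\bin(N)$ runs in $O(N^k)=O(2^m)$); just note explicitly, as you implicitly do, that the final step $\cospec\subseteq\spec\Rightarrow\ne=\cone$ uses $\spec=\ne$.
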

\begin{proof}
The ``only if'' direction is trivial.
The ``if'' direction is as follows.
Suppose $\cospec_3^{bin} \subseteq \spec$.
Since $\ntime {2^n} \subseteq \spec_3$ (and uses only binary relations),
this means that for every $A \in \ntime {2^n}$,
the complement $\bbN^+ -A \in \spec$, and hence, also $\bbN^+ - A \in \ne$.
By padding argument,
this implies that for every set $A \in \ne$,
the complement $\bbN^+ - A$ also belongs to $\ne$.
\end{proof}

To end this section, we present a slightly weaker result of Theorem~\ref{theo:spec-ntime},
i.e. $\spec_k \subseteq \ntime {n^2 2^{kn}}$, 
which is already sufficient to yield the hierarchy in Corollary~\ref{cor:easy-var-hierarchy}.
First, we show the following normalisation of first-order logic with $k$ variables.

\begin{proposition}
\label{prop:normalisation}
{\bf (Normalisation of first-order logic with $k$ variables)}
Each first-order sentence $\phi$ with at most $k$ distinct variables
$\vx = (x_1,\ldots,x_k)$ is equivalent to an existential second-order sentence
of the form: $\Phi := \exists R_1 \cdots \exists R_m \ \phi'$,
where each $R_i$ is a relation symbol of arity $\leq k$,
and $\phi'$ is a conjunction of first-order sentences with
variables $\vx = (x_1,\ldots,x_k)$ of either of the forms (1) and (2) below:
\begin{enumerate}[(1)]\itemsep=0pt
\item
$\forall x_1 \ \cdots \ \forall x_{k-1} \ \forall x_k \ \psi(x_1,\ldots,x_k)$,
\item
$\forall x_1 \ \cdots \ \forall x_{k-1} \ \exists x_k \ \psi(x_1,\ldots,x_k)$,
\end{enumerate}
where $\psi(x_1,\ldots,x_k)$ is a quantifier-free formula
in disjunctive normal form.
\end{proposition}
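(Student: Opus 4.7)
The plan is to introduce, for each subformula $\chi$ of $\phi$, a fresh relation symbol $R_\chi$ whose arity equals the number of free variables of $\chi$ (so arity $\leq k$), and then take the existentially quantified relations $R_1,\ldots,R_m$ to be exactly these $R_\chi$'s. The quantifier-free conjunct $\phi'$ will consist of one or two defining axioms per subformula, plus a top-level axiom stating that $R_\phi$ holds. Without loss of generality I would first rewrite $\phi$ so that every atomic formula uses only variables from $\{x_1,\ldots,x_k\}$ and every quantifier binds a single such variable.

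For an atomic subformula $\chi$, or a Boolean combination $\chi=\chi_1\circ\chi_2$ with $\circ\in\{\wedge,\vee,\rightarrow\}$ (or $\chi=\neg\chi_1$), the defining axiom has the shape $\forall\vy\,(R_\chi(\vy)\leftrightarrow\chi^\ast(\vy))$, where $\chi^\ast$ is $\chi$ with immediate subformulas replaced by their $R$'s; this is purely universal, so I pad the prefix with dummy $\forall x_j$'s over the unused variables and convert the biconditional matrix to DNF, yielding an axiom of form (1). For a quantified subformula $\chi=\forall x_i\,\chi'(\vy,x_i)$ I split the biconditional into two implications. The forward one, $R_\chi(\vy)\rightarrow\forall x_i\,R_{\chi'}(\vy,x_i)$, after pulling $\forall x_i$ outside (legal because $x_i$ is not free in $R_\chi(\vy)$), becomes purely universal and gives form (1). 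The backward one is equivalent to $\forall\vy\,\exists x_i\,\bigl(\neg R_{\chi'}(\vy,x_i)\vee R_\chi(\vy)\bigr)$; renaming $x_i$ to $x_k$ and padding the outer universal block to $\forall x_1\cdots\forall x_{k-1}$ yields form (2). The case $\chi=\exists x_i\,\chi'$ is symmetric. Finally I add the axiom $\forall x_1\cdots\forall x_k\,R_\phi$ (recall $\phi$ is a sentence, so $R_\phi$ is $0$-ary), of form (1).

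Correctness is a routine structural induction: given a model of $\phi$ one interprets each $R_\chi$ as the extension of $\chi$ to obtain a model of $\Phi=\exists R_1\cdots\exists R_m\,\phi'$, and conversely the defining axioms force the $R_\chi$'s to interpret the corresponding subformulas in any model of $\Phi$, so $R_\phi$ holds iff $\phi$ does.

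The only real obstacle is the variable bookkeeping. For the quantifier case $\chi=Qx_i\,\chi'$, the subformula $\chi$ has free variables among $\{x_1,\ldots,x_k\}\setminus\{x_i\}$, so the defining axiom mentions at most $k$ distinct variables, which is exactly the $k$-variable budget. This is tight: it is precisely this count that allows the renaming of the single existential to $x_k$ and the padding of the remaining $k-1$ positions with the other variables. If the hypothesis had allowed more than $k$ variables in $\phi$ the above reduction would break, so the statement is essentially the strongest ``$\Sigma_1^1$ with $k$-variable first-order matrix'' normal form consistent with the $k$-variable input.
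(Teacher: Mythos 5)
Your proposal is correct and follows essentially the same route as the paper: one fresh relation symbol per subformula with arity equal to its number of free variables, purely universal defining axioms for atoms and Boolean connectives, the biconditional for a quantified subformula split into a form-(1) and a form-(2) implication, padding the prefixes with dummy quantifiers, and a routine induction for correctness. The only cosmetic difference is that the paper first pushes negations to the atoms rather than treating $\neg$ and $\rightarrow$ as connectives, which changes nothing of substance.
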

\begin{proof}
First, we assume that all the negations in $\phi$ are pushed
inside to the atomic formulae.

We associate each subformula $\theta(v_1,\ldots,v_q)$ of $\phi$,
where $0 \leq q \leq k$ and each $v_i \in \vx$,
including the sentence $\phi$, with a new relation symbol $R_{\theta}$
of arity $q$.
The relation symbol $R_{\theta}$ intuitively represents $\theta$.
Note that a relation symbol of arity $0$ is a Boolean variable
which can be either true or false.

The formula $\phi'$ is the conjunction of the atomic relation $R_{\phi}$
of arity $0$
and the formula $\delta_{\theta}$ corresponding to subformula $\theta(v_1,\ldots,v_q)$
of $\phi$ defined inductively as follows.
\begin{itemize}\itemsep=0pt
\item
If $\theta$ is a negation of an atomic formula $S(v_1,\ldots,v_q)$,
then 
\begin{eqnarray*}
\delta_{\theta} & := & 
\forall v_1 \cdots \forall v_q \ R_{\theta}(v_1,\ldots,v_q) \IFF
\neg S(v_1,\ldots,v_q).
\end{eqnarray*}
\item
If $\theta$ is of the form $\theta_1 \circledast \theta_2$, 
with free variables $v_1,\ldots,v_q$,
where $\circledast \in \{\wedge,\vee\}$
then 
\begin{eqnarray*}
\delta_{\theta} & := & 
\forall v_1 \cdots \forall v_q \ R_{\theta}(v_1,\ldots,v_q) \IFF
R_{\theta_1}(v_1,\ldots,v_q) \circledast R_{\theta_2}(v_1,\ldots,v_q).
\end{eqnarray*}
Note that if $\theta$ has no free variable, then
$\delta_{\theta}$ is $R_{\theta} \IFF R_{\theta_1} \circledast R_{\theta_2}$.
\item
If $\theta$ is $\forall v_q \ \theta'(v_1,\ldots,v_{q-1},v_q)$,
then 
\begin{eqnarray*}
\delta_{\theta} & := & 
\forall v_1 \cdots \forall v_{q-1} \  
R_{\theta}(v_1,\ldots,v_{q-1}) \IFF
\forall v_q \ 
R_{\theta'}(v_1,\ldots,v_q),
\end{eqnarray*}
which is equivalent to
\begin{eqnarray*}
\delta_{\theta} & := & 
\big(\forall v_1 \cdots \forall v_{q-1} \forall v_{q}
\ R_{\theta}(v_1,\ldots,v_{q-1}) \Rightarrow
R_{\theta'}(v_1,\ldots,v_q)
\big)
\ \wedge 
\\
& &
\big(\forall v_1 \cdots \forall v_{q-1} \exists v_{q} \
R_{\theta'}(v_1,\ldots,v_q)
\Rightarrow 
R_{\theta}(v_1,\ldots,v_{q-1})
\big).
\end{eqnarray*}
\item
If $\theta$ is $\exists v_q \ \theta'(v_1,\ldots,v_{q-1},v_q)$,
then 
\begin{eqnarray*}
\delta_{\theta} & := & 
\forall v_1 \cdots \forall v_{q-1} \  
R_{\theta}(v_1,\ldots,v_{q-1}) \IFF
\exists v_q \ 
R_{\theta'}(v_1,\ldots,v_q),
\end{eqnarray*}
which is equivalent to
\begin{eqnarray*}
\delta_{\theta} & := & 
\big(\forall v_1 \cdots \forall v_{q-1} \exists v_{q}
\ R_{\theta}(v_1,\ldots,v_{q-1}) \Rightarrow
R_{\theta'}(v_1,\ldots,v_q)
\big)
\ \wedge 
\\
& &
\big(\forall v_1 \cdots \forall v_{q-1} \forall v_{q} \
R_{\theta'}(v_1,\ldots,v_q)
\Rightarrow 
R_{\theta}(v_1,\ldots,v_{q-1})
\big).
\end{eqnarray*}
\end{itemize}
Note that in the definition above, if $\theta$ is an atomic formula,
then $R_{\theta}$ is $\theta$ itself.

Written formally,
\begin{eqnarray*}
\Phi & := & \exists R_1 \ \cdots \ \exists R_m
\ R_{\phi} \ \wedge \ \bigwedge_{\theta} \delta_{\theta},
\end{eqnarray*}
where $R_1,\ldots,R_m$ are all the $R_{\theta}$'s
and $\theta$ spans over all the subformulae of $\phi$.
It is straightforward to see that $\Phi$ and $\phi$ are equivalent.
\end{proof}

The following complexity result is an easy consequence of the normalisation lemma:

\begin{corollary}
\label{cor:spec-ntime}
For every positive integer $k$,
$\spec_k \subseteq \ntime {2^{kn}n^2}$.
\end{corollary}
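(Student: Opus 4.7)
My plan is to combine the normalization in Proposition~\ref{prop:normalisation} with a direct multi-tape NTM simulation. Given a $k$-variable sentence $\phi$, rewrite it in the normalized form $\Phi = \exists R_1 \cdots \exists R_m\, \phi'$, where $m$ depends only on $\phi$, each $R_i$ has arity $\ar(R_i) \le k$, and $\phi'$ is a conjunction of prenex formulas of the form $\forall x_1 \cdots \forall x_k\, \psi$ or $\forall x_1 \cdots \forall x_{k-1}\, \exists x_k\, \psi$, with $\psi$ a constant-size quantifier-free DNF.

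On input $\bin(N)$ of length $n = O(\log N)$, the NTM I would design first reconstructs $N$ and then initializes $m$ working tapes, the $i$-th of length $N^{\ar(R_i)} \le N^k$, filling them nondeterministically with the guessed contents of the $R_i$'s; since $m$ is constant this guessing phase costs $O(N^k)$. It then verifies each conjunct of $\phi'$ by enumerating $(x_1, \ldots, x_{k-1}) \in [N]^{k-1}$ in lex order and, depending on the conjunct, either scanning all $x_k \in [N]$ (universal case) or guessing a witness $x_k$ nondeterministically (existential case). For every resulting assignment of $(x_1, \ldots, x_k)$ it evaluates the matrix $\psi$ once. Since $\phi'$ has constantly many conjuncts, at most $O(N^k)$ evaluations of $\psi$ are performed in total, with the nondeterministic guesses of the witnesses contributing a further $O(N^{k-1} \log N)$ bits per conjunct.

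The main technical issue I expect is that a naive lookup into a relation tape of length up to $N^k$ can cost up to $N^k$ head moves on a multi-tape NTM, which would push verification to $\Omega(N^{2k})$. To avoid this, I would insert a preprocessing step that replaces each atom $R_i(x_{j_1}, \ldots, x_{j_q})$ occurring in $\phi'$ by an auxiliary arity-$k$ relation whose bits, laid out in the same lex order over $(x_1, \ldots, x_k)$ as the verification loop uses, record the value of that atom. Such a permutation-plus-padding of coordinates on an $N^k$-bit tape can be carried out by a constant number of standard sort-and-copy passes on the multi-tape NTM, each pass costing $O(N^k \log N)$ because the indices being sorted have $k \log N$ bits. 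With these auxiliary tapes in place, the verification loop simply advances all heads in lockstep, evaluating $\psi$ in $O(1)$ per tuple. Summing the preprocessing, guessing, and verification costs, the running time is dominated by $O(N^k (\log N)^2) = O(2^{kn} n^2)$, which is the claimed bound.
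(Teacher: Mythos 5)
Your proposal is correct, and it reaches the right bound $O(N^k(\log N)^2)=O(2^{kn}n^2)$, but the middle of the argument is genuinely different from the paper's. Both proofs start from the normalisation of Proposition~\ref{prop:normalisation}, and both ultimately lean on the same folklore fact that a multi-tape Turing machine sorts a list of words of total length $\lambda$ in time $O(\lambda\log\lambda)$. Where the paper grounds $\phi'$ over the domain $[N]$ into a propositional formula $F_{\phi,N}$ of size $O(N^k)$ that is a conjunction of DNFs, nondeterministically picks one disjunct per conjunct, and checks consistency of the resulting set of literals by sorting it, you instead keep the relational structure: you guess the $R_i$'s on tapes and model-check $\phi'$ directly, using a sort-based preprocessing step to re-lay each atom occurrence out in the lex order of the full tuple $(x_1,\ldots,x_k)$ so that the verification loop can advance all heads in lockstep. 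You correctly identified the one place a naive simulation breaks (random access into an $N^k$-bit tape), and your fix is sound; your route arguably makes the source of the $n^2$ factor more transparent (it is exactly the cost of sorting $N^k$ records with $O(\log N)$-bit keys), at the price of more Turing-machine bookkeeping than the paper's cleaner reduction to satisfiability of a conjunction of DNFs. One small inaccuracy: a \emph{constant} number of passes each costing $O(N^k\log N)$ would sum to $O(N^k\log N)$, not $O(N^k\log^2 N)$; in fact a full lexicographic sort of $N^k$ records with $k\log N$-bit keys needs $\Theta(\log N^k)$ merge passes, so each sort genuinely costs $O(N^k\log^2 N)$. Since that is precisely the bound you end up claiming, the slip does not affect the result, but the accounting sentence should be fixed.
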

\begin{proof}
By the above lemma, each first-order sentence $\phi$ using $k$ variables is
equivalent to the normalised formula $\Phi := \exists R_1 \cdots \exists R_m \ \phi'$.
By our construction, the quantification-depth of $\phi'$ is $k$.
Hence, on the domain $[N]$, where  $N=\Theta(2^n)$,
one can obtain a propositional Boolean formula $F_{\phi,N}$
with size $O(N^k)$\footnote{The size of a propositional Boolean formula is
the total sum of the number of appearances of each atom.},
such that $N \in \spec(\phi)$ if and only if $F_{\phi,N}$ is satisfiable.

It is well known that the satisfiability of problem
of a propositional Boolean formula $F$ of size $\ell$
with variables $p_i$ of indices $i \leq \ell$,
hence, of total length $|F| = O(\ell \log \ell)$
(in a fixed finite alphabet), can be solved in time $O(\ell \log^2 \ell)$
on a non-deterministic Turing machine.
We present it here in our specific case where,
as a straightforward consequence of Proposition~\ref{prop:normalisation},
the Boolean formula $F_{\phi,N}$ so obtained is 
a conjunction of DNF formulae, i.e.
of the form $F_{\phi,N} : C_1 \wedge \cdots C_m$,
and each $C_i$ is a DNF formula.
It is easy to see that the satisfiability problem
of formula in such a form can be decided by the following
non-deterministic algorithm:
\begin{itemize}\itemsep=0pt
\item
For each conjunct $C_i$, choose (non-deterministically)
a disjunct $\gamma_i$ of $C_i$.
Note that $\gamma_i$ is a conjunction of literals.
\item
Check deterministically whether
the conjunction $G:= \gamma_1 \wedge \cdots \wedge \gamma_m$
which is a conjunction $\ell_1 \wedge \cdots \wedge \ell_q$
of literals is satisfiable.
This can be done by sorting the list of literals $\ell_1,\ldots,\ell_q$
of $G$ in lexicographical order and checking that the sorted
list contains no pair of contiguous contradictory literals $p,\neg p$.
\end{itemize}
It is a folklore result that
a list of non-empty words $w_1,\ldots,w_q$ can be sorted
in lexicographical order on a multi-tape Turing machine in $O(\lambda \log \lambda)$,
where $\lambda = |w_1|+\cdots + |w_q|$.
Here, we have $\lambda = |G| \leq |F_{\phi,N}| = O(\ell \log \ell)$.
Altogether, it takes $O(\ell \log^2 \ell)$ time.
\end{proof}




\section{A finer hierarchy}
\label{sec:var-hierarchy}

In this section we are going to present a finer hierarchy of the spectra:
For every integer $k \geq 3$, $\spec_{k} \subsetneq \spec_{2k+2}$.
The outline of the proof follows the one in the previous subsection.

\begin{theorem}
\label{theo:ntime-spec-half}
For every integer $k \geq 2$, $\ntime {2^{(k+\frac{1}{2})n}} \subseteq \spec_{2k+2}$.
\end{theorem}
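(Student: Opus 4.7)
The plan is to adapt the construction of Theorem~\ref{theo:ntime-spec} so that each axis of the space-time diagram uses the equivalent of $k+\half$ base-$N$ digits. Given $A\in\ntime{2^{(k+\half)n}}$, I would fix a $t$-tape NTM $M$ accepting $\bin(A)$ within time $N^{k+\half}$; its accepting run on $\bin(N)$ lives on an $N^{k+\half}\times N^{k+\half}$ grid whose $N^{2k+1}$ cells match the number of $(2k{+}1)$-tuples of universe elements. The naive setup---one half-digit per axis---would need $2(k{+}1)$ address variables plus one auxiliary, totalling $2k+3$. I would save a variable by introducing a \emph{shared} half-variable $w\in U$ that contributes its left half to the $X$-coordinate and its right half to the $Y$-coordinate.

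Concretely, introduce a unary predicate $R$ and binary relations $\pi_L,\pi_R$, with axioms forcing $|R|=\sqrt{N}$ and $(\pi_L,\pi_R)$ to be a bijection $U\to R\times R$. Using address variables $x_1,\ldots,x_k,y_1,\ldots,y_k,w$, all ranging over $U$, encode the cell $(X,Y)$ by
\[
X = \pi_L(w)\cdot N^k + \textstyle\sum_{i=1}^k x_i N^{i-1}, \qquad Y = \pi_R(w)\cdot N^k + \textstyle\sum_{i=1}^k y_i N^{i-1}.
\]
One auxiliary $z$ then brings the total to $2k+2$, while $\relSYMBOL^i_a$ and $\relSTATE^i_q$ become $(2k{+}1)$-ary.

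The shift operators follow Theorem~\ref{theo:ntime-spec}: $\Delta_X$ is a big disjunction over the carry level, each ordinary level using $z$ to hold the incremented digit and re-binding $x_1$ to the minimum to clear lower digits. The genuinely new case is when a carry in $\Delta_X$ overflows $x_1,\ldots,x_k$ and must increment $\pi_L(w)$ while fixing $\pi_R(w)$. To stay within budget I would predefine a binary relation $\textsf{INC}_L(w,w')$ meaning ``$\pi_L(w')=\pi_L(w)+1$ and $\pi_R(w')=\pi_R(w)$''; its defining axiom uses five distinct variables, fitting within $2k+2\geq 6$ for $k\geq 2$. The overflow case then reads $\exists z\,(\textsf{INC}_L(w,z)\wedge\cdots)$, using only the lone auxiliary $z$. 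A symmetric $\textsf{INC}_R$ handles $\Delta_Y$. The transition axioms, single-head axioms, and $\textsf{BIT}$-based input verification are imported verbatim from Proposition~\ref{prop:three-var} and Theorem~\ref{theo:ntime-spec}.

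The main obstacle is bookkeeping: every axiom---those for $\pi_L,\pi_R$, for $\textsf{INC}_L$ and $\textsf{INC}_R$, and for each carry level of the shift operators---must be re-writable in at most $2k+2$ distinct variable slots via aggressive reuse. A secondary issue is that the bijection axiom forces $|U|$ to be a perfect square. To reach non-square $N\in A$ I would add a ``leftover'' unary predicate $S$ of size $N-\lfloor\sqrt{N}\rfloor^2<2\sqrt{N}$, promote $(\pi_L,\pi_R)$ to a partial injection $R\times R\hookrightarrow U\setminus S$, and invoke linear speed-up so that $M$ runs within $\lfloor\sqrt{N}\rfloor\cdot N^k\le N^{k+\half}$ steps, which the slightly reduced usable grid of $\lfloor\sqrt{N}\rfloor N^k$ cells per axis still accommodates.
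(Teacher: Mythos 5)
Your proposal is correct and is essentially the paper's own argument: both save the extra variable by letting a single universe element encode the \emph{pair} of half-digits (one per axis) of a $\lfloor\sqrt{N}\rfloor\times\lfloor\sqrt{N}\rfloor$ block, recovered via first-order-definable projections, with one auxiliary variable recycled across the shift operators. The only differences are cosmetic: the paper places the half-digit in the least significant position and realizes the pairing arithmetically as $r=r_1+r_2R$ with $R=\lfloor\sqrt{N-1}\rfloor$ (so $r$ ranges over $[R^2]\subseteq[N]$ and the perfect-square issue you patch with the leftover set $S$ never arises), whereas you place it in the most significant position and quantify the pairing existentially.
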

\begin{proof}
We follow the outline of Proposition \ref{prop:three-var} and Theorem
\ref{theo:ntime-spec}. Now $M$ is an NTM 
that accepts $\bin(N)$ in time
$N^kR$ and space $N^kR$, where $R = \lfloor \sqrt{N-1}\rfloor$.
The space-time diagram of the computation of $M$ is then
depicted as an $[N^k\cdot R]\times [N^k\cdot R]$ grid.

Each point in $[N^k\cdot R]\times [N^k\cdot R]$ grid
can be identified as a point in $[N]^k\times [R]\times [N]^k\times [R]$.
By the converse of the pairing function $(r)\mapsto (\pi_x(r),\pi_y(r))$,
where $\pi_x(r) = r \bmod R = r_1$, and $\pi_y(r) = \lfloor(r / R)\rfloor = r_2$,
each point in $((\vx,r_1),(\vy,r_2)) \in [N]^k\times [R]\times [N]^k\times [R]$
can be represented as $(\vx,\vy,r) \in [N]^k\times [N]^k \times [N]$,
where $r= r_1 +r_2R$.

So the computation of $M$ can be viewed as labelling of the point
$(\vx,\vy,r) \in [N]^k\times [N]^k \times [N]$.
The only difference now is we need to define the shift operators 
$\Delta_h^r$, $\overline{\Delta}_h^r$ and $\Delta_v^r$ -- 
the analog of the shift operators
$\Delta_h$, $\overline{\Delta}_h$ and $\Delta_v$, respectively, 
in the proof of Theorem~\ref{theo:ntime-spec}.


As previously, we define the order $<$, minimum $\relMIN$, maximum $\relMAX$,
and the induced successor relation $\suc$.
We also define the following relations:
\begin{itemize}\itemsep=0pt
\item 
$\relADD(x,y,z)$ which holds if and only if $x+y=z$.
$$
\forall x \forall y \forall z
\left(
\begin{array}{l}
\relADD(x,y,z) \ \IFF
\\
\left(
\begin{array}{l}
(\relMIN(y) \wedge x=z) \ \vee 
\\
(\exists y'  \ \exists z' \ \suc(y',y)
 \wedge \suc(z',z) \wedge \relADD(x,y',z'))
\end{array}
\right)
\end{array}
\right)
$$
\item 
$\relMUL(x,y,z)$ which holds if and only if $xy=z$.
$$
\forall x 
\forall y 
\forall z 
\left(
\begin{array}{l}
\relMUL(x,y,z) \ \IFF
\\
\left(
\begin{array}{l}
(\relMIN(y) \wedge \relMIN(z)) \ \vee
\\
\exists y' \ \exists z' \
(\suc(y',y) \wedge \relMUL(x,y',z') \wedge \relADD(z',x,z) \end{array}
\right)
\end{array}
\right)
$$

\item 
$\relISR(x)$ which holds if only if $x = R$.
$$
\forall x \
\left(
\begin{array}{ccc}
\relISR(x) 
& \IFF &
\Big(
\exists y\ \relMUL(x,x,y) \wedge \neg \exists x' \exists y' \ x'>x \wedge \relMUL(x',x',y')
\Big)
\end{array}
\right)
$$

\item 
$\relLESSR(x)$ which holds if only if $x < R$.
$$
\forall x \
\left(
\begin{array}{ccc}
\relLESSR(x) 
& \IFF &
\exists y \ y>x \wedge \relISR(y)
\end{array}
\right)
$$

\item 
$\relLESSRR(x)$ which holds if only if $x < R^2$.
$$
\forall x \
\left(
\begin{array}{ccc}
\relLESSRR(x) 
& \IFF &
\exists y \ \exists z \ 
\relISR(y) \ \wedge \ \relMUL(y,y,z) \ \wedge \ x < z
\end{array}
\right)
$$

\item 
$\relPROJECT(r,x,y)$ which holds if only if 
$x = \pi_x(r) = r \ \mbox{mod} \ R$ and $y=\pi_y(r) = \floor {r/R}$.
$$
\forall r
\forall x
\forall y 
\left(
\begin{array}{l}
\relPROJECT(r,x,y) \ \IFF
\\
\left(
\begin{array}{l}
\relLESSRR(r) \wedge \relLESSR(x) \wedge \relLESSR(y) \wedge
\\
\exists z \ \exists z' \
(\relISR(z') \wedge \relMUL(y,z',z) \wedge \relADD(x,z,r)
\end{array}
\right)
\end{array}
\right)
$$
\end{itemize}
Using the relations above, it is straightforward
to write the definitions below as first-order axioms using at most five variables:
\begin{itemize}\itemsep=0pt
\item 
Cyclic successor in $[R]$: 

$\relcyc(x,y)$ if and only if $x,y \in [R]$, and either $y=x+1$, or $x=R-1$ and $y=0$.
\item 
Horizontal successor in $[R^2]$: 

$\relsucx(r,r')$ if and only if 
$r,r' \in [R^2]$, $\pi_y(r) = \pi_y(r')$ and $\relcyc(\pi_x(r), \pi_x(r'))$.
\item 
Vertical successor in $[R^2]$: 

$\relsucy(r,r')$ if and only if $r,r' \in [R^2]$, $\pi_x(r) = \pi_x(r')$ and $\relcyc(\pi_y(r), \pi_y(r'))$.
\item 
Horizontal minimum in $[R^2]$:

$\relMINx(r)$ if and only if $r \in [R^2]$ and $\pi_x(r) = 0$.

\item 
Vertical minimum in $[R^2]$: 

$\relMINy(r)$ if and only if $r \in [R^2]$ and $\pi_y(r) = 0$.
\end{itemize}
All the definitions above use at most five variables, which is $\leq 2k+2$, for each integer $k\geq 2$.

The operators $\Delta^r_h \phi$, $\overline{\Delta}^r_h \phi$ and $\Delta^r_v \phi$ are defined as follows.
\begin{eqnarray*}
\Delta^r_h \phi(\vx,\vy,r) &:=& 
\forall z 
\left(
\begin{array}{l}
\relsucx(r,z) \Rightarrow 
\left(
\begin{array}{lll}
(\relMINx(z) &\wedge& \Delta_h \phi(\vx,\vy,z)) \ \vee \\
(\neg \relMINx(z) &\wedge& \phi(\vx,\vy,z)))
\end{array}
\right) 
\end{array}
\right)
\\
\overline{\Delta}^r_h \phi(\vx,\vy,r) &:=& 
\forall z 
\left(
\begin{array}{l}
\relsucx(z,r) \Rightarrow 
\left(
\begin{array}{lll}
(\relMINx(r) &\wedge& \overline{\Delta}_h \phi(\vx,\vy,z)) \ \vee \\
(\neg \relMINx(r) &\wedge& \phi(\vx,\vy,z)))
\end{array}
\right) 
\end{array}
\right)
\\
\Delta^r_v \phi(\vx,\vy,r) & := &
\forall z 
\left(
\begin{array}{l}
\relsucy(r,z) \Rightarrow 
\left(
\begin{array}{lll}
(\relMINy(z) &\wedge& \Delta_v \phi(\vx,\vy,z)) \ \vee \\
(\neg \relMINy(z) &\wedge& \phi(\vx,\vy,z)))
\end{array}
\right) 
\end{array}
\right)
\end{eqnarray*}
where $\Delta_h$ is to access the successor of $\vx$,
$\overline{\Delta}_h$ the predecessor of $\vx$
and $\Delta_v$ the successor of $\vy$.
They are all defined just like in the proof of Theorem~\ref{theo:ntime-spec}.
This completes our proof of Theorem~\ref{theo:ntime-spec-half}.
\end{proof}

Now, combining both 
Theorems~\ref{theo:ntime-spec-half} and~\ref{theo:spec-ntime},
as well as the argument in the proof of Corollary~\ref{cor:easy-var-hierarchy},
we obtain that:
$$
\spec_k  \subseteq 
\ntime{n^2 2^{kn}}  \subsetneq 
\ntime{2^{(k+\half)n}}  \subseteq 
\spec_{2k+2},
$$
hence, establishing the following hierarchy.

\begin{corollary}
\label{cor:var-hierarchy}
For every integer $k \geq 3$, $\spec_k \subsetneq \spec_{2k+2}$.
\end{corollary}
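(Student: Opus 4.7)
The plan is to establish the strict inclusion $\spec_k \subsetneq \spec_{2k+2}$ by sandwiching these two spectra classes between two non-deterministic time classes that are known to be separated by Cook's hierarchy theorem. Concretely, I would assemble the chain
$$
\spec_k \ \subseteq \ \ntime{n^2 2^{kn}} \ \subsetneq \ \ntime{2^{(k+\half)n}} \ \subseteq \ \spec_{2k+2},
$$
and then read off the desired strict inclusion from the middle strict link.

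First I would invoke Theorem~\ref{theo:spec-ntime} (Grandjean) to obtain the left inclusion: every $k$-variable spectrum can be checked by a non-deterministic Turing machine in time $O(n^2 2^{kn})$, where $n$ is the length of the binary representation of the input integer. Next, for the right inclusion I would invoke Theorem~\ref{theo:ntime-spec-half} (already established in this section), which encodes an NTM run of length $N^k \cdot \lfloor \sqrt{N-1}\rfloor$ into a $(2k{+}2)$-variable first-order sentence; this is precisely the content needed to show $\ntime{2^{(k+\half)n}} \subseteq \spec_{2k+2}$, since $2^{(k+\half)n}$ corresponds to time roughly $N^{k+1/2}$ when $N = 2^n$.

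The middle strict containment is where the real content lives, and is the step I would most want to verify carefully. Cook's non-deterministic time hierarchy theorem~\cite{Cook73} gives $\ntime{f(n)} \subsetneq \ntime{g(n)}$ whenever $g$ is time-constructible and $f(n+1) = o(g(n))$ (or an equivalent formulation). Here with $f(n) = n^2 2^{kn}$ and $g(n) = 2^{(k+\half)n}$, the ratio $g(n)/f(n) = 2^{n/2}/n^2$ tends to infinity, and this easily dominates the modest shift $n \mapsto n+1$, so the hypotheses of Cook's theorem are satisfied. This is the only non-bookkeeping step of the argument, and the condition $k \geq 3$ does not enter essentially here—it is inherited from the statement of the earlier theorems and from the already-known base cases separating $\spec_1, \spec_2, \spec_3$.

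Combining the three links yields $\spec_k \subsetneq \spec_{2k+2}$ for every $k \geq 3$, which is exactly the statement of the corollary. No additional ingredients are required beyond the two theorems just cited and Cook's hierarchy theorem; the proof is therefore essentially a one-line chaining argument, and no step presents a substantive obstacle once Theorem~\ref{theo:ntime-spec-half} is in hand.
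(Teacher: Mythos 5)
Your proof is correct and is essentially identical to the paper's: the same chain $\spec_k \subseteq \ntime{n^2 2^{kn}} \subsetneq \ntime{2^{(k+\half)n}} \subseteq \spec_{2k+2}$, with the left link from Theorem~\ref{theo:spec-ntime}, the right link from Theorem~\ref{theo:ntime-spec-half}, and the middle strict inclusion from Cook's non-deterministic time hierarchy theorem. Your explicit verification of the hypotheses of Cook's theorem is a welcome bit of extra care that the paper leaves implicit.
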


\section{Translating our results to classes $\np$ and $\soe$}
\label{sec:np-soe}

In this section we are going to show how our results 
can be translated into relations between the class $\np$
and the class of existential second-order sentences $\soe$.
We provide a brief review of their definitions here.
For more details, we refer the reader to Immerman's textbook~\cite{Immerman99}.

Let $\soe$ denote the class of existential second-order sentences.
A sentence $\Phi \in \soe$ defines a class of structures
$\{\stra \mid \stra \models \Phi\}$.
A celebrated result of Fagin states that
$\soe = \np$, where
the input to the $\np$ Turing machine is the binary encoding
of the structures.

Let $\soevar {k}$ be the class $\soe$ where the first-order sentences 
uses only up to $k$ variables.
Now, Theorems~\ref{theo:ntime-spec} and~\ref{theo:ntime-spec-half} 
can be respectively rewritten as:
\begin{eqnarray}
\mbox{For any integer} \ k\geq 1,
& & \ntime{n^k} \subseteq \soevar {2k + 1}
\label{eq1}
\\
\mbox{For any integer} \ k\geq 2,
& & \ntime{n^{k+1/2}}\subseteq \soevar {2k + 2}
\label{eq2}
\end{eqnarray}
Indeed, let $M$ be a non-deterministic Turing machine accepting a
binary language $L$ within time $O(n^k)$, where $n$ is the length of the input string 
$w =w_0\ldots w_{n-1} \in \{0,1\}^*$ . The input can be viewed as a structure over $[n]$ 
with vocabulary the binary successor relation $\suc$ and the unary predicate $S$, 
where $S(x)$ holds if and only if $w_x =1$.

The formula $\Phi$ constructed in the proof of Theorem~\ref{theo:ntime-spec} 
(resp. Theorem~\ref{theo:ntime-spec-half}) can be viewed as an 
$\soevar {2k + 1}$ (resp. $\soevar {2k + 2}$) formula,
where the predicates $\relSYMBOL^i_a$ and $\relSTATE^i_q$, as well as $\relDOUBLE$,
$\relHALF$, $\relDIV$, $\relBIT$ , etc. are existentially quantified.

On the other hand, Theorem~\ref{theo:spec-ntime} can be rewritten as:
\begin{eqnarray}
\label{eq3}
\soevar {k} \subseteq \ntime {n^k \log^2 n}
\end{eqnarray}
Equations~\ref{eq3} and~\ref{eq2} then yield the chain of inclusions:
$$
\soevar {k} \ \subseteq \ \ntime{n^k \log^2 n}
\ \subsetneq \
\ntime {n^{k+1/2}} \ \subseteq \ \ \soevar{2k + 2}
$$
and hence, $\soevar {k} \subsetneq \soevar {2k + 2}$, 
for each $k\geq 3$.

\section{Concluding remarks}
\label{sec:conclusion}

In this paper we present two results
that we believe contribute to our understanding of the spectra problem.
The first is that there is an
infinite hierarchy of first-order spectra based on the number of variables:
$\spec_{k} \subsetneq \spec_{2k+2}$.
The proof is based on tight relationships between the class $\ne$
and first-order spectra $\spec$.

The second result is  
that to settle Asser's conjecture it is sufficient
to consider sentences using three variables and binary relations.
This seems to be the furthest we can go.
As mentioned in Section~\ref{sec:related-works},
we recently showed that the class of spectra of 
two-variable logic with counting quantifiers are exactly semilinear sets,
and closed under complement~\cite{KT}.

\subsection*{Acknowledgement}
We would like to thank Etienne Grandjean for his careful reading
and in providing most of the literature pointers for Section~\ref{sec:related-works},
and in suggesting Proposition~\ref{prop:normalisation} and
Corollary~\ref{cor:spec-ntime}.
His very extensive comments have guided and helped us in improving our manuscript.
We also thank the anonymous reviewer for his careful review and
Ron Fagin for some discussions on related results.
The first author is supported by the Polish National Science Centre Grant
DEC - 2012/07/D/ST6/02435.
The second author is supported by FWO Pegasus Marie Curie fellowship.

\bibliographystyle{acmtrans}
\bibliography{spec-k}

\end{document}